\newcommand{\lyxaddress}[1]{
\par {\raggedright #1
\vspace{1.4em}
\noindent\par}
}
  \theoremstyle{plain}
  \newtheorem*{thm*}{\protect\theoremname}
\theoremstyle{plain}
\newtheorem{thm}{\protect\theoremname}
  \theoremstyle{plain}
  \newtheorem{cor}[thm]{\protect\corollaryname}
  \providecommand{\corollaryname}{Corollary}
  \providecommand{\theoremname}{Theorem}
\providecommand{\theoremname}{Theorem}
\begin{document}

\title{Quantum Chaos and Quantum Computing Structures}

\author{Carlos Pedro Gonçalves}

\institution{{\small Instituto Superior de Ciências Sociais e Políticas (ISCSP)
- Technical University of Lisbon}}

\maketitle

\lyxaddress{cgoncalves@iscsp.utl.pt}
\begin{abstract}
A system of quantum computing structures is introduced and proven
capable of making emerge, on average, the orbits of classical bounded
nonlinear maps on $\mathbb{C}$ through the iterative action of path-dependent
quantum gates. The effects of emerging nonlinear dynamics and chaos
upon the quantum averages of relevant observables and quantum probabilities
are exemplified for a version of Chirikov's standard map on $\mathbb{C}$.
Both the individual orbits and ensemble properties are addressed so
that the Poincaré map for Chirikov's standard map, in the current
quantum setting, is reinterpreted in terms of a quantum ensemble which
is then formally introduced within the formalized system of quantum
computing structures, in terms of quantum register machines, revealing
three phases of quantum ensemble dynamics: the regular, the chaotic
and an intermediate phase called \emph{complex quantum stochastic
phase} which shares similarities to the \emph{edge of chaos} notion
from classical cellular automata and classical random boolean networks'
evolutionary computation.\end{abstract}
\begin{keywords}
Quantum Computation, Quantum Chaos, Emergence, Quantum Complex Systems
Science.
\end{keywords}

\section{Quantum Chaos and Complex Systems Science}

Complex quantum systems science, primarily concerned with complex
quantum systems \cite{key-6,key-9,key-10,key-16,key-17,key-18}, has
grown in interdisciplinary applications to include quantum econophysics
\cite{key-14,key-15,key-19,key-20}, showing effective results in
dealing with financial market turbulence and financial risk modelling
\cite{key-7,key-8}. Such results open up the way for the growth of
a \emph{quantum complex systems research} program%
\footnote{A point established, in particular, by Saptsin and Soloviev in their
articles on conceptual foundations for quantum econophysics and its
relevance for complex systems science \cite{key-19,key-20}.%
}, which includes both complex quantum systems as well as other systems
to which the quantum formalism, mathematically generalized from quantum
computation theory, provides solutions that the classical formalism
does not.

Given this recent growth, we can now speak of a \emph{quantum complex
systems science} (on purpose reversion of the first two words), that
is, a branch of the complexity sciences dealing with both \emph{complex
quantum systems} and the generalization of the quantum formalism and
methods to other areas of research dealing with \emph{complex systems}
in general. The current work is a contribution to such a research
program, addressing emerging chaotic dynamics in \emph{qunat}-based
quantum computation \cite{key-23}.

By quantum chaos theory it is understood, here, the formal combination
of quantum theory and chaos theory. Two lines of research have developed
within quantum chaos theory: the first one has addressed quantum counterparts
of classically chaotic systems \cite{key-21}, the second one has
addressed chaotic quantization \cite{key-1}. In the current work,
we prove that \emph{qunat}-based quantum computation can make emerge
classical nonlinear dynamics and, in particular, chaotic orbits%
\footnote{We are applying here Varela's notion of \emph{enaction} \cite{key-22},
thus, when we state \emph{to make emerge} we are using this authors'
concept of \emph{to enact} as \emph{to make emerge}, so that a \emph{quantum
computing structure} can \emph{make emerge}, from that computation,
an on average classically chaotic dynamics.%
} (\emph{section 2.}). The present work is, therefore, developed from
the combination of the first line of research on quantum chaos with
quantum computation, showing that classically chaotic orbits can emerge
from quantum computation.

In \emph{section 3.}, we address the example of a version of Chirikov's
standard map \cite{key-4} on $\mathbb{C}$ that exemplifies chaotic
dynamics emerging from quantum dynamics and the consequences for the
underlying quantum system in regards to the quantum state dynamics
and quantum averages.

In \emph{section 4.}, we address the ensemble representation and how,
for a system of quantum registers, the Poincaré maps can be used as
a visualization tool for emerging nonlinear dynamics in a system of
$m$ quantum computing register machines. Three dynamical phases are
identified in the ensemble quantum dynamics one of them, called \emph{complex
quantum stochastic phase}, is addressed in greater detail and linked
to classical complex systems science's notion of \emph{edge of chaos}.

In \emph{section 5.}, the main consequences of the present work for
quantum mechanics and quantum complex systems science are addressed.

\section{Quantum Computing Structures, Chaos and Nonlinear Dynamics}

Let a \emph{qunat bosonic} quantum computing structure be defined
as a triple $\left(\mathcal{H},\mathcal{A},\mathcal{U}\right)$, where
$\mathcal{A}$ is an operator structure comprised of the bosonic creation
and anihilation operators $a^{\dagger}$ and $a$, respectively, along
with the commutator relations $\left[a,a^{\dagger}\right]=1$, $\mathcal{H}$
is the Hilbert space spanned by the basis of the number operator's
eigenstates $\hat{N}\left|n\right\rangle =n\left|n\right\rangle $,
for $n=0,1,2,...$, $\mathcal{U}$ is the family of \emph{qunat} quantum
gates defined as the unitary operators on $\mathcal{H}$.

Now let $\mathcal{C}\subset\mathcal{H}$ be the collection of coherent
states defined as the eigenstates of the anihilation operator, that
is:
\begin{equation}
\mathcal{C}:=\left\{ a\left|c\right\rangle =c\left|c\right\rangle :c\in\mathbb{C}\right\} 
\end{equation}
Define $Coh$ as the mathematical category \cite{key-13} whose object
set is $\mathcal{C}$ and morphisms are the unitary operators in $\hom(Coh)\subset\mathcal{U}$,
satisfying the condition:
\begin{equation}
\hom(Coh):=\left\{ U(c',c)\in\mathcal{U}:U(c',c)\left|c\right\rangle =\left|c'\right\rangle \right\} 
\end{equation}
that is, each operator in $\mathcal{U}_{Coh}$ connects two, and only
two, coherent states, unitarily transforming one into the other. Given
the definition of $\hom(Coh)$ it immediately follows that any $U(c',c)$
decomposes as the product of two optical displacement operators \cite{key-6}:
$U(c',c)=D(c')D(c)^{\dagger}$ (with $D(h):=\exp\left(ha^{\dagger}-h^{*}a\right)$).
Defining composition operation as the product between the two unitary
operators $U\left(c'',c'\right)\circ U\left(c',c\right)=U\left(c'',c'\right)U\left(c',c\right)$
it immediately follows, from the relation to the optical displacement
operators, that the closure of $\hom(Coh)$ with respect to compostion
and the associativity can be proven to be satisfied by $\hom(Coh)$.
Defining identity morphic connection by $1_{c}=U(c,c)=D(c)D(c)^{\dagger}$,
it follows that all of the properties of a category are satisfied
by $Coh$, which can be proved from the optical displacement operator
decomposition.

Now, a \emph{quantum computing basis} of $Coh$ (not to be confused
with the vector basis of the Hilbert space in $\left(\mathcal{H},\mathcal{A},\mathcal{U}\right)$)
is defined as a subcategory $S\subseteq Coh$. Given a \emph{quantum
computing basis} $S$ of $Coh$ we state that a complex-valued function
$F:X\mapsto Y$, with $X,\: Y\subseteq\mathbb{C}$, emerges from\emph{
$S$} if, and only if, the following three conditions are met: 
\begin{enumerate}
\item $X_{Q}\cup Y_{Q}\subseteq S$, for $X_{Q}:=\left\{ \left|c\right\rangle \in\mathcal{C}:c\in X\right\} $,
$Y_{Q}:=\left\{ \left|c\right\rangle \in\mathcal{C}:c\in Y\right\} $,
that is, the collection of coherent states corresponding to the elements
of the domain of $F$ and the collection of coherent states corresponding
to the elements of the codomain of $F$ are both in $S$;
\item There is a $\hom\left(S_{F}\right)\subseteq\hom\left(S\right)$ such
that, for every $\left|c\right\rangle \in X_{Q}$, there is one, and
only one, $U\left(c',c\right)\in\hom\left(S_{F}\right)$ satisfying
$U(c',c)\left|c\right\rangle \in Y_{Q}$ and it is such that $c'=F(c)$,
that is, we can find a subcollection of morphisms which maps each
coherent state $\left|c\right\rangle $ in $X_{Q}$ to one, and only
one, coherent state in $Y_{Q}$, and this coherent state is the state
corresponding to the complex number that is the image of $c$ in $Y$
under $F$.
\end{enumerate}
We call the structure $Q_{F}:=\left(X_{Q}\cup Y_{Q},\hom\left(S_{F}\right)\right)$
the \emph{quantum generator} of $F$ in $S$. The generator's quantum
computational pattern \emph{enacts} a mathematical function through
the unitary gate actions upon $X_{Q}$. Varela's notion of \emph{enaction}
is the systemically proper notion to use in this case \cite{key-22},
quantum computation is about a systemic concrete, it takes place with
systemic effects in the system's quantum state, and, in this case,
the function is not computed as an direct output of a quantum gate,
it emerges out of the pattern of quantum gates and computations performed
by a computing system, which leads to the notion of emergence of a
function out of the quantum computation performed by a \emph{quantum
computing basis}, by taking the average of the anihilation operator,
that is, we have the following matching between (a concrete) quantum
computation%
\footnote{Concrete since it refers to a systemic (physical) concrete that is
computing (and without computing system there is no computation),
a fact that is independent of the interpretation of quantum mechanics.%
} and (an abstract) complex-valued function%
\footnote{The function belongs to the level of the mathematical language which
addresses, in this case, the eigenvalues of the anihilation operator
in terms of an abstract complex number system.%
}:
\begin{equation}
\begin{array}{ccc}
\left|c\right\rangle \in X_{Q} & \overset{U(c',c)}{\longrightarrow} & \left|c'\right\rangle \in Y_{Q}\\
\downarrow &  & \downarrow\\
\left\langle c|a|c\right\rangle =c\in X & \underset{F}{\longrightarrow} & \left\langle c'|a|c'\right\rangle =F(c)\in Y
\end{array}
\end{equation}

Thus, given the input $\left|c\right\rangle $ the quantum generator
$Q_{F}$ unitarily transforms it in the output $\left|c'\right\rangle =\left|F(c)\right\rangle $,
this means that, evaluating, at the input, the quantum average of
the anihiliation operator, we obtain $\left\langle c|a|c\right\rangle $
which coincides with $c$ because we are dealing with coherent states,
on the other hand, and since we are dealing with coherent states,
we obtain $\left\langle c'|a|c'\right\rangle =c'$, the emergence
of $F$ from the quantum computation implies, then, that $c'=F(c)$.

In systemic terms, we are dealing with an emergence because the function
is not the directing agent, indeed, this is not a \emph{top-down}
process, the quantum computation leads to a relation that can be synthesized
by the function as identifiable in the relation between input $\left\langle c|a|c\right\rangle $
and output $\left\langle c'|a|c'\right\rangle $, for each $c$ in
the domain of $F$, thus, the complex-valued function $F$ emerges
from the quantum computation.
\begin{thm*}
(Quantum Emergence of Complex-Valued Functions) Any complex-valued
function $F:X\mapsto Y$, with $X,\: Y\subseteq\mathbb{C}$, can emerge
from a quantum computing basis of $Coh$.\end{thm*}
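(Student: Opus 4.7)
The plan is to give a constructive proof: given an arbitrary complex-valued $F:X\mapsto Y$, I will exhibit an explicit subcategory $S\subseteq Coh$ together with an explicit sub-collection $\hom(S_F)\subseteq\hom(S)$ satisfying the two emergence conditions. Since the hard structural work has already been done in proving that $Coh$ is a category via the displacement-operator decomposition $U(c',c)=D(c')D(c)^{\dagger}$, the theorem should reduce to a definition-chase in which $F$ is essentially ``encoded'' by selecting the appropriate morphisms of the ambient category $Coh$.

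Concretely, first I would set $X_Q:=\{\left|c\right\rangle:c\in X\}$ and $Y_Q:=\{\left|c\right\rangle:c\in Y\}$, and take $S$ to be the full subcategory of $Coh$ whose object set is $X_Q\cup Y_Q$ and whose morphisms are all $U(c'',c')=D(c'')D(c')^{\dagger}$ with $\left|c'\right\rangle,\left|c''\right\rangle\in X_Q\cup Y_Q$. Being a full subcategory of $Coh$, associativity, closure under composition, and the presence of identities $1_c=D(c)D(c)^{\dagger}$ for each object all transfer from $Coh$ verbatim, so $S$ is a bona fide quantum computing basis in the sense of the paper. Next I would define
\begin{equation}
\hom(S_F):=\left\{U(F(c),c)=D(F(c))D(c)^{\dagger}:c\in X\right\}\subseteq\hom(S),
\end{equation}
which is well-defined because for every $c\in X$ the value $F(c)\in Y\subseteq\mathbb{C}$ determines a unique coherent state $\left|F(c)\right\rangle\in Y_Q$ and hence a unique morphism in $\hom(Coh)$ connecting $\left|c\right\rangle$ to $\left|F(c)\right\rangle$.

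Verification of the two emergence conditions is then immediate: condition (1) is built into the choice of object set of $S$, and condition (2) follows since, for each $\left|c\right\rangle\in X_Q$, the construction selects exactly one morphism in $\hom(S_F)$ with source $\left|c\right\rangle$, namely $U(F(c),c)$, whose target is $\left|F(c)\right\rangle\in Y_Q$; uniqueness is inherited from the fact that $F$ is a function, and the required identification $\left\langle F(c)\right|a\left|F(c)\right\rangle=F(c)$ follows from the coherent-state eigenvalue property already used in the commutative diagram preceding the theorem. The main (and only genuine) obstacle is the bookkeeping subtlety when $X\cap Y\neq\emptyset$: for such $c$, the coherent state $\left|c\right\rangle$ lies in both $X_Q$ and $Y_Q$, and one must confirm that the uniqueness clause in condition (2) is still respected because $\hom(S_F)$ is constructed to contain precisely one morphism out of each source in $X_Q$, regardless of whether that source also happens to lie in $Y_Q$. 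Everything else is routine, and no property of $F$ beyond its being a single-valued map is used, which is exactly what ``\emph{any} complex-valued function'' requires.
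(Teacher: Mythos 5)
Your construction is essentially identical to the paper's own proof: your $X_Q\cup Y_Q$ is exactly the paper's $\tau^{-1}(X)\cup\tau^{-1}(Y)$, your selected family $\hom(S_F)=\left\{U(F(c),c):c\in X\right\}$ is the paper's $\hom\left(\tau^{-1}(X)\cup\tau^{-1}(Y)\right)$, and both arguments rest on the same two facts, namely the coherent-state/complex-number correspondence via $\left\langle c\right|a\left|c\right\rangle=c$ and the existence of a morphism $U(c',c)=D(c')D(c)^{\dagger}$ between any two coherent states. The proposal is correct in the paper's framework (your explicit full-subcategory choice and the remark on $X\cap Y\neq\emptyset$ are harmless refinements), so no further comparison is needed.
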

\begin{proof}
From conditions (1) and (2), and from the definition of $Coh$, it
follows that there is a one-to-one and onto correspondence between
$\mathcal{C}$ (the collection of objects of $Coh$) and the complex
numbers $\mathbb{C}$, this correspondence, defined as $\tau:\mathcal{C}\mapsto\mathbb{C}$,
can be built through an operation of taking the quantum average of
the anihilation operator, which leads to a corresponding complex number
corresponding to the coherent state, that is:
\begin{equation}
\tau\left(\left|c\right\rangle \right):=\left\langle c\left|a\right|c\right\rangle =c
\end{equation}

Now, any complex-valued function $F:X\mapsto Y$, with $X$, $Y\subseteq\mathbb{C}$
is such that, to each element of the domain $c\in X$, it associates
one, and only one element in the codomain $Y$, let, then, $S$ be
a subcategory of $Coh$ satisfying $\tau^{-1}(X)\cup\tau^{-1}(Y)\subseteq S$,
we know that such a subcategory exists from the correspondence between
$\mathcal{C}$ and $\mathbb{C}$. Let, also, the subcollection of
morphisms $\hom\left(\tau^{-1}(X)\cup\tau^{-1}(Y)\right)\subseteq\hom\left(S\right)$
be such that for each $\left|c\right\rangle \in\tau^{-1}(X)$ there
is one, and only one $U\left(c',c\right)$ for $c'\in\tau^{-1}(Y)$
and $c'$ is such that $c'=F(c)$, thus, the output in $\tau^{-1}(Y)$
of each object (coherent state) in $\tau^{-1}(X)$, under the quantum
gates in $\hom\left(\tau^{-1}(X)\cup\tau^{-1}(Y)\right)$ matches
the function $F$, this matching is in accordance with the following
square: 
\begin{equation}
\begin{array}{ccc}
\left|c\right\rangle  & \overset{U(c',c)}{\longrightarrow} & \left|c'\right\rangle \\
\tau\downarrow &  & \downarrow\tau\\
c & \underset{F}{\longrightarrow} & c'
\end{array}
\end{equation}
so that $F$ is locally isomorphic to each quantum computation under
the structure $Q_{F}:=\left(\tau^{-1}(X)\cup\tau^{-1}(Y),\hom\left(\tau^{-1}(X)\cup\tau^{-1}(Y)\right)\right)$
which is, therefore, a \emph{quantum generator} of $F$. We know that
such a subcategory $S$ with a structure like $Q_{F}$ exists, since
the object structure $\tau^{-1}(X)\cup\tau^{-1}(Y)$ exists in $Coh$,
which follows directly from the correspondence between $\mathcal{C}$
and $\mathbb{C}$, and given that, for any pair of coherent states
$\left|c\right\rangle $ and $\left|c'\right\rangle $, there is a
corresponding \emph{quantum gate} in $\hom\left(Coh\right)$ such
that $U(c',c)\left|c\right\rangle =\left|c'\right\rangle $, then,
it follows that the morphisms that compose $\hom\left(\tau^{-1}(X)\cup\tau^{-1}(Y)\right)$,
with the structure laid out above, also exist, therefore, we find
that any complex-valued function can indeed emerge from a quantum
computing basis of $Coh$.
\end{proof}
This theorem has a relevant computational consequence, in the following
sense: while the quantum computing structure $\left(\mathcal{H},\mathcal{A},\mathcal{U}\right)$
works with a discrete (number operator) basis with corresponding eigenvalues
expressed in natural numbers, it is capable of performing a computational
jump to hypercomputation, being able to quantum computationally make
emerge a function on a mathematical continuum within a quantized setting
of basis states.

The immediate corollary of this theorem is that:
\begin{cor}
Any bounded nonlinear map on an complex interval $\mathbb{I}\subset\mathbb{C}$,
$F:\mathbb{I}\mapsto\mathbb{I}$ can emerge from a quantum computing
basis of $Coh$.
\end{cor}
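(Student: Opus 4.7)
The plan is to obtain the statement as an immediate specialization of the preceding Quantum Emergence Theorem, taking $X=Y=\mathbb{I}$. First I would invoke the bijective correspondence $\tau:\mathcal{C}\mapsto\mathbb{C}$ defined by $\tau(\left|c\right\rangle)=\left\langle c\left|a\right|c\right\rangle =c$, and restrict it to the bijection between $\tau^{-1}(\mathbb{I})$ and $\mathbb{I}$; this supplies the object collection of the subcategory $S\subseteq Coh$ required by the theorem, since $X_Q\cup Y_Q=\tau^{-1}(\mathbb{I})$ is a subset of the objects of $Coh$.

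Next I would apply the theorem directly to extract a subcollection of morphisms $\hom(S_F)\subseteq\hom(S)$ such that, for every $\left|c\right\rangle \in\tau^{-1}(\mathbb{I})$, there is a unique $U(c',c)\in\hom(S_F)$ with $c'=F(c)\in\mathbb{I}$. The key point to stress is that the hypotheses ``bounded'' and ``nonlinear'' add no new requirement beyond those of the theorem: boundedness merely records that $\mathbb{I}$ is a bounded complex interval, and nonlinearity is a property of the functional form of $F$ alone, which is irrelevant to the existence of the optical displacement decomposition $U(c',c)=D(c')D(c)^{\dagger}$ that underlies $\hom(Coh)$. Since $D(h)=\exp\!\left(ha^{\dagger}-h^{*}a\right)$ is defined for every $h\in\mathbb{C}$, a suitable quantum gate exists for each pair $(c,F(c))$ regardless of how pathological or nonlinear $F$ is.

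One additional observation I would make explicit, since this corollary is meant to underwrite the iterated dynamics of Section 3, is that because $F$ is a \emph{self-map} on $\mathbb{I}$, the output $\left|F(c)\right\rangle $ lies again in $\tau^{-1}(\mathbb{I})\subseteq S$, so the quantum generator $Q_F=\left(\tau^{-1}(\mathbb{I}),\hom(S_F)\right)$ can be reapplied without leaving $S$, enacting the orbit $c,F(c),F^{2}(c),\dots$ entirely inside the same quantum computing basis.

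The main obstacle is essentially absent: the only nontrivial content was already discharged in the theorem, and what remains is the bookkeeping ensuring domain--codomain closure under iteration. I would therefore present the proof as a one-paragraph specialization, with a brief remark noting the closure property that makes the corollary usable for the subsequent treatment of Chirikov's standard map on $\mathbb{C}$.
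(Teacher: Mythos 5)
Your proposal is correct and matches the paper's treatment: the corollary is presented there as an immediate specialization of the Quantum Emergence of Complex-Valued Functions theorem with $X=Y=\mathbb{I}$ (the same substitution the paper later makes explicit in the Emergent Classical Orbits proof). Your added remarks---that boundedness and nonlinearity impose no extra conditions, and that the self-map property keeps iterates inside $\tau^{-1}(\mathbb{I})$---are consistent with, and anticipate, the paper's subsequent use of the corollary.
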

Not only can a \emph{bounded nonlinear map} emerge from a \emph{quantum
computing basis} of $Coh$, but also its iterates:
\begin{cor}
Given any bounded nonlinear map on an complex interval $\mathbb{I}\subset\mathbb{C}$,
$F:\mathbb{I}\mapsto\mathbb{I}$, each of its iterates $F^{s}$, with
$s=1,2,...,$ being complex-valued functions can also emerge from
a quantum computing basis of $Coh$.
\end{cor}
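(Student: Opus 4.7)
The plan is to reduce the corollary to Corollary 1 (and hence to the main theorem) by observing that each iterate $F^{s}$ is itself a bounded complex-valued self-map of $\mathbb{I}$. Since $F : \mathbb{I} \to \mathbb{I}$ carries the bounded complex interval into itself, a straightforward induction on $s$ shows that the composite $F^{s} = F \circ F \circ \cdots \circ F$ ($s$ factors) is well-defined on $\mathbb{I}$ with image in $\mathbb{I}$; boundedness is inherited automatically because the orbit never leaves $\mathbb{I}$. Applying Corollary 1 to each $F^{s}$ then yields a quantum computing basis of $Coh$ from which $F^{s}$ emerges.

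For a more constructive proof that also exhibits the underlying quantum-computational mechanism, I would build the generator of every iterate inside the category already produced for $F$. Starting from $Q_{F} = (X_{Q}, \hom(S_{F}))$ with $X_{Q} = \tau^{-1}(\mathbb{I})$, I would form the composite morphisms
\begin{equation}
U\bigl(F^{s}(c), c\bigr) := U\bigl(F^{s}(c), F^{s-1}(c)\bigr) \circ \cdots \circ U\bigl(F(c), c\bigr),
\end{equation}
each factor of which already lies in $\hom(S_{F}) \subseteq \hom(Coh)$. Because $\hom(Coh)$ was shown to be closed under composition via the optical displacement decomposition $U(c',c) = D(c')D(c)^{\dagger}$, each such composite also lies in $\hom(Coh)$, and the collection $\hom(S_{F^{s}})$ of all such composites supplies the morphism family required by condition (2) of the emergence definition applied to $F^{s}$.

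The main, and essentially the only, obstacle is verifying that this family assigns exactly one morphism to each $\left|c\right\rangle \in X_{Q}$ whose action produces the state $\left|F^{s}(c)\right\rangle \in X_{Q}$. This is immediate from the bijectivity of $\tau$ together with the fact that $F^{s}(c)$ is a single, well-defined element of $\mathbb{I}$: the target coherent state $\left|F^{s}(c)\right\rangle$ is uniquely determined, hence so is the selected morphism $U(F^{s}(c), c)$. Once this uniqueness is in place, the emergence square of the main theorem closes under $s$-fold iteration, and the corollary follows for every $s = 1, 2, \ldots$.
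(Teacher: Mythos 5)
Your proposal is correct, and its first paragraph is exactly the paper's own (implicit) justification: the paper states this corollary without a separate proof, treating it as immediate because each iterate $F^{s}$ is itself a bounded complex-valued self-map of $\mathbb{I}$, so the Quantum Emergence of Complex-Valued Functions theorem (via Corollary 1) applies directly. Your constructive second half, composing the gates $U(F(c),c)$ inside $\hom(Coh)$ using closure under composition from the displacement-operator decomposition, is a sound addition that essentially anticipates the mechanism the paper uses later in the proof of the Emergent Classical Orbits theorem, where orbits arise from iterated gate actions $\left|c_{s}\right\rangle =U\left(F\left(c_{s-1}\right),c_{s-1}\right)\left|c_{s-1}\right\rangle$; it buys an explicit generator for $F^{s}$ built from the generator of $F$, at the cost of the extra bookkeeping on uniqueness that you correctly dispatch via the bijectivity of $\tau$.
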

From these results it, then, follows that classical nonlinear maps'
dynamics can also emerge from a quantum computing basis of $Coh$,
a statement that can be placed in the form of a theorem:
\begin{thm*}
(Emergent Classical Orbits) Let $F$ be a bounded complex-valued nonlinear
map on an interval $\mathbb{I}\subset\mathbb{C}$, then, each orbit
of $F$ in $\mathbb{I}$ can emerge from a quantum computing basis
of $Coh$.\end{thm*}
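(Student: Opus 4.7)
The plan is to leverage the previous results, specifically the corollary on the emergence of iterates $F^{s}$, and assemble them into a single quantum computing basis that tracks an entire orbit. First I would fix a seed $c_{0}\in\mathbb{I}$ and generate the classical orbit $\mathcal{O}(c_{0}):=\{c_{s}\}_{s=0,1,2,\ldots}$ with $c_{s}:=F^{s}(c_{0})$; since $F:\mathbb{I}\mapsto\mathbb{I}$ is bounded and $\mathbb{I}\subset\mathbb{C}$ is invariant under $F$, the entire orbit lies in $\mathbb{I}$.

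Next I would transport this orbit into the coherent-state world via the bijection $\tau:\mathcal{C}\mapsto\mathbb{C}$ introduced in the proof of the Quantum Emergence Theorem, defining the object set $O_{Q}:=\tau^{-1}(\mathcal{O}(c_{0}))=\{|c_{s}\rangle:s=0,1,2,\ldots\}\subset\mathcal{C}$. For each $s$ I would select the unique gate $U(c_{s+1},c_{s})\in\hom(Coh)$ whose existence is guaranteed by the definition of $\hom(Coh)$ (explicitly, $U(c_{s+1},c_{s})=D(c_{s+1})D(c_{s})^{\dagger}$) and collect these into a morphism set $\hom(S_{F})$. Taking the closure of $O_{Q}$ and $\hom(S_{F})$ under composition and adjoining the identities $1_{c_{s}}=U(c_{s},c_{s})$ yields a subcategory $S\subseteq Coh$; associativity and closure follow immediately from the displacement-operator decomposition, exactly as in the establishment of $Coh$ itself.

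I would then verify the two emergence conditions at each iteration step. Condition (1) is immediate because, at step $s$, both $|c_{s}\rangle$ and $|c_{s+1}\rangle$ belong to $S$ by construction. For condition (2), applying the Quantum Emergence Theorem to $F$ (in the form used in its proof) gives $U(c_{s+1},c_{s})|c_{s}\rangle=|c_{s+1}\rangle=|F(c_{s})\rangle$, which reproduces the orbit one step at a time. Composing successive gates, $U(c_{s},c_{0}):=U(c_{s},c_{s-1})\circ\cdots\circ U(c_{1},c_{0})$, and invoking Corollary~2 for $F^{s}$, one checks that the commutative square linking $\tau$ and $F$ in the proof of the main theorem iterates to a commutative square linking $\tau$ and $F^{s}$. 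Taking averages, $\langle c_{s}|a|c_{s}\rangle=c_{s}=F^{s}(c_{0})$, one recovers the entire classical orbit as a sequence of quantum averages emerging from successive unitary gate actions on the initial coherent state.

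The main obstacle I anticipate is strictly formal rather than computational: ensuring that a single subcategory $S$ can simultaneously host the possibly countably infinite orbit and all compositions of the selected gates while remaining a legitimate subcategory of $Coh$. This is handled by simply defining $S$ as the subcategory generated by $O_{Q}$ together with the chosen one-step gates; closure, associativity and identities are then inherited from $Coh$ via the displacement-operator representation. A secondary subtlety, worth flagging though not obstructive, is that the orbit need not be injective (periodic orbits collapse $O_{Q}$ to a finite set and force $U(c_{s+p},c_{s})=1_{c_{s}}$ up to a global phase whenever $c_{s+p}=c_{s}$); but this is consistent with condition (2), since the ``one and only one'' morphism selected out of $\hom(S_{F})$ for each domain state is precisely the identity in that case, and the emergence diagram remains commutative.
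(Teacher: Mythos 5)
Your argument is correct and takes essentially the same route as the paper's own proof: iterate the one-step gates $U\left(F\left(c_{s-1}\right),c_{s-1}\right)$ on the initial coherent state and recover the classical orbit through $\tau$, i.e.\ through the quantum averages $\left\langle c_{s}\right|a\left|c_{s}\right\rangle =c_{s}$, invoking the emergence theorem and its corollaries together with the displacement-operator decomposition. The only difference is inessential: you build a minimal subcategory generated by the orbit itself (with extra, welcome care about closure under composition and periodic orbits), whereas the paper simply reuses the quantum generator over all of $\tau^{-1}\left(\mathbb{I}\right)$ obtained by setting $X=Y=\mathbb{I}$ in the earlier theorem.
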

\begin{proof}
Let $F$ be a bounded complex-valued nonlinear map on an interval
$\mathbb{I}\subset\mathbb{C}$, From the \emph{quantum emergence of
complex-valued functions theorem} and corollary 1. we know that there
is a quantum computing basis of $Coh$ from which this map can emerge,
denote such a basis by $S$, by replacing $X$ and $Y$ by $\mathbb{I}$,
in the the \emph{quantum emergence of complex-valued functions theorem}'s
proof, we arrive at the resulting quantum generator of $F$ in $S$:
$\left(\tau^{-1}\left(\mathbb{I}\right),\textrm{hom}\left(\tau^{-1}\left(\mathbb{I}\right)\right)\right)$.

Now, let an orbit of $F$ in $\mathbb{I}$ be defined by a sequence
$\left\{ c_{s}\right\} _{s}$, $s=0,1,2,...$, with the following
iterative scheme:
\begin{equation}
c_{s}=F\left(c_{s-1}\right)
\end{equation}
then, from the structure of $\left(\tau^{-1}\left(\mathbb{I}\right),\textrm{hom}\left(\tau^{-1}\left(\mathbb{I}\right)\right)\right)$,
we know that, for each such orbit, there is a corresponding sequence
of coherent states $\left\{ \left|c_{s}\right\rangle \right\} _{s}$,
for $s=0,1,2,...$, obeying the scheme:
\begin{equation}
\left|c_{s}\right\rangle =U\left(F\left(c_{s-1}\right),c_{s-1}\right)\left|c_{s-1}\right\rangle 
\end{equation}
Each orbit of $F$ can, then, be addressed as the result from taking
the average of the anihilation operator for a sequence of coherent
states which, in turn, can be placed in isomorphic correspondence
with the classical orbit under $\tau\left(\left|c_{s}\right\rangle \right)=\left\langle c_{s}\right|a\left|c_{s}\right\rangle =c_{s}$,
that is:
\begin{equation}
\left\{ \left|c_{s}\right\rangle \right\} _{s}\overset{\tau}{\longrightarrow}\left\{ c_{s}\right\} _{s}
\end{equation}
with the local connections: 
\begin{equation}
\begin{array}{ccc}
\left|c_{s-1}\right\rangle  & \overset{U\left(c_{s},c_{s-1}\right)}{\longrightarrow} & \left|c_{s}\right\rangle \\
\tau\downarrow &  & \downarrow\tau\\
c_{s-1} & \underset{F}{\longrightarrow} & c_{s}
\end{array}
\end{equation}
which allows us to conclude that each orbit of $F$ can emerge from
a \emph{quantum computing basis} of $Coh$, emerging from the coherent
state unitary tansition in the form of a sequence of quantum averages
of the anihilation operator (as per definition of $\tau$).
\end{proof}
In the last theorem, it follows that we are dealing with a form of
path-dependent quantum computation, in the sense that each classical
orbit emerges from a specific sequence of unitary operators actions
on an initial condition $\left|c_{0}\right\rangle $, a sequence that
matches the iterations of the classical map from the corresponding
initial condition $\tau\left(\left|c_{0}\right\rangle \right)=c_{0}$.

If a bounded complex-valued nonlinear map on an interval $\mathbb{I\subset\mathbb{C}}$
is chaotic on $\mathbb{J\subseteq\mathbb{I}},$ then, under this last
theorem, it follows that the chaotic dynamics can emerge from the
quantum dynamics.

To each chaotic orbit there corresponds a specific sequence of coherent
states such that the chaotic orbit emerges from the coherent state
unitary transition in the form of a sequence of quantum averages of
the anihilation operator. In turn, the sequence of Poisson number
distribution of coherent states is such that the probability for each
number state becomes expressed as a nonlinear function of a chaotic
orbit:
\begin{equation}
P_{s}(n)=\left|\left\langle n|c_{s}\right\rangle \right|^{2}=\frac{\left|c_{s}\right|^{2n}}{n!}\exp\left(-\left|c_{s}\right|^{2}\right)
\end{equation}
the quantum probabilities will thus also show chaotic fluctuations
as the quantum computation proceeds. Furthermore, for initial coherent
states with $\left|c_{0}\right\rangle $ and $\left|c_{0}'\right\rangle $,
whose corresponding anihilation operator eigenvalues are in a small
neighborhood of each other, the resulting state sequences will diverge,
with respect to their corresponding anihilation operator's eigenvalues,
in accordance with the sensitive dependence upon initial conditions
that characterizes classical chaotic dynamics, this will also leave
its mark upon the probabilities, which will also differ after a few
quantum computations.

The quantum average of the number operator will also be a nonlinear
function of a chaotic orbit, indeed, we have:
\begin{equation}
\left\langle \hat{N}\right\rangle =\left|c_{s}\right|^{2}=\left|F\left(c_{s-1}\right)\right|^{2}
\end{equation}

We now illustrate this type of quantum computationally emergent chaos
and its consequences to the sequence of number operator averages,
for a version of Chirikov's standard map on $\mathbb{C}$.

\section{Chaos and Quantum Computation - Chirikov's Standard Map on $\mathbb{C}$ }

Let $S$ be a subcategory of $Coh$, and $\left\{ U\left(c_{s},c_{s-1}\right)\right\} _{s}$,
for $s=1,2,...,$ be a sequence of \emph{qunat gates} in $\hom\left(S\right)$,
such that:
\begin{equation}
\left|c_{s}\right\rangle =U\left(F\left(c_{s-1}\right),c_{s-1}\right)\left|c_{s-1}\right\rangle 
\end{equation}
and, for the polar coordinate representation $c_{s-1}=r_{s-1}e^{i\phi_{s-1}}$,
$F$ has the form:

\begin{equation}
F\left(r_{s-1}e^{i\phi_{s-1}}\right)=r_{s}e^{i\phi_{s}}
\end{equation}
\begin{equation}
r_{s}=r_{s-1}+K\sin\phi_{s-1}
\end{equation}

\begin{equation}
\phi_{s}=\phi_{s-1}+r_{s}
\end{equation}
both equations (14) and (15) are defined modulo $2\pi$.

Given this structure, with an initially coherent state $\left|r_{0}e^{i\phi_{0}}\right\rangle $,
with $r_{0}$ and $\phi_{0}$ both ranging between $0$ and $2\pi$,
the sequence of \emph{qunat gates} will act on the initial input leading
to a sequence of state transitions that leads to a sequence of quantum
averages of the anihilation operator coinciding with an orbit of a
complex-valued nonlinear map with the same functional shape as Chirikov's
standard map, introduced by Chirikov in the study of the kicked rotator
\cite{key-4}. There are quantum mechanical consequences of this dynamics
for observables of interest, in particular, in what regards the quantum
average of the number operator. Taking the expected value of the number
operator the following temporal dependence is found:

\begin{equation}
\left\langle \hat{N}\right\rangle _{s+1}=\left|c_{s+1}\right|^{2}=r_{s+1}^{2}=r_{s}^{2}+2r_{s}K\sin\phi_{s}+K^{2}\sin^{2}\phi_{s}
\end{equation}

An example is shown in Fig.1, of a transition to chaotic dynamics
for two different values of $K$ in terms of the power spectra for
the orbits of the quantum average of the number operator for a fixed
inicial condition. In the regular dynamics (Fig.1(left)) we can see
several significant frequencies, marking an oscillatory behavior in
the dynamics of $\left\langle \hat{N}\right\rangle _{s}$. As the
parameter $K$ is increased (Fig.1(right)) the dynamics of $\left\langle \hat{N}\right\rangle _{s}$
shows the presence of stochastic behavior with a few frequency regions
standing out. These frequency regions are more visible for $K$ near
the critical parameter $K_{c}=0.971635406...$ above which the last
KAM invariant torus is destroyed, for $K=0.971635406$, which is close
to 9 decimal places of the critical parameter, the presence of a broadband
spectrum of a stochastic signal shows not only a few frequency windows
that stand out in the spectrum but, also, a few frequency spikes (Fig.2).

Chirikov \cite{key-4} addressed such random motion in terms of the
nonlinear dynamics of interactions of resonances, more specifically,
the simultaneous effect on a nonlinear oscillator of several perturbations
with different frequencies such that in the limiting case of very
large overlapping resonant zones the system will show random motion,
a process that, in the present case, due to the nonlinear path dependence
of the quantum computation takes place in the sequence of phases and
amplitudes of the anihilation operator's eigenvalue orbit, leaving
its markings in the number operator's quantum averages' sequence.

Besides the single orbit analysis, as we move on to consider the dynamics
for different initial conditions, employing the Poincaré maps as visualization
tools we reach, in the quantum setting, a quantum statistical description
in terms of ensembles and start dealing with a complex quantum system
undergoing \emph{path-dependent quantum computation} making emerge
different dynamical regimes with conceptual implications for the study
of complex quantum systems. We, now, address this ensemble dynamics
in terms of a (bosonic) quantum register machine computational structure.

\section{Quantum Register Machine and Complex Quantum Systems' Dynamics}

The previous section's results allow us to obtain a general picture
of how dynamical stochasticity can emerge in the sequences of quantum
averages in a quantum computation. The quantum setting, on the other
hand, introduces a new perspective on the Poincaré map analysis of
the classical standard map in terms of quantum statistical ensembles,
which we now address.

If one looks at the Poincaré maps of Fig.3, within the quantum setting,
one may see that each of these is actually depicting the anihilation
operator eigenvalue orbit for an ensemble of 120 bosonic oscillators%
\footnote{The number was chosen purely out of representational purposes, since
120 is a high enough number, though not too high so that it would
make the whole plot thick black.%
}, each undergoing a local unitary evolution, this type of parallel
quantum computation, generalized to $m$-bosonic oscillators can be
addressed by a $m$-entry quantum register machine, formalized from
the single \emph{qunat bosonic} quantum computing structures.

Indeed, given the triple $\left(\mathcal{H},\mathcal{A},\mathcal{U}\right)$,
introduced in section 1., then, define the $m$-entry quantum register
machine by the tensor product structure $\left(\mathcal{H}^{m},\mathcal{A}^{m},\mathcal{U}^{m}\right)=\bigotimes_{j=1}^{m}\left(\mathcal{H}_{j},\mathcal{A}_{j},\mathcal{U}_{j}\right)$,
such that:
\begin{itemize}
\item Each $\left(\mathcal{H}_{j},\mathcal{A}_{j},\mathcal{U}_{j}\right):=\left(\mathcal{H},\mathcal{A},\mathcal{U}\right)$,
that is, each triple in the tensor product is a labeled copy of a
\emph{qunat bosonic} quantum computing structure;
\item $\mathcal{H}^{m}:=\bigotimes_{j=1}^{m}\mathcal{H}_{j}$ is a tensor-product
of $m$ copies of the same Hilbert space;
\item $\mathcal{A}^{m}:=\left\{ a_{i}:i=1,2,...,m\right\} \cup\left\{ a_{j}^{\dagger}:j=1,2,...,m\right\} $,
where the creation and anihilation operators obey the bosonic commutation
relations $\left[a_{i},a_{j}\right]=0$, $\left[a_{i}^{\dagger},a_{j}^{\dagger}\right]=0$
and $\left[a_{i},a_{j}^{\dagger}\right]=\delta_{i,j}$;
\item $\mathcal{U}^{m}$ is the family of unitary operators on $\mathcal{H}^{m}$.
\end{itemize}
In this case, the creation and anihilation operators can act on the
corresponding bosonic oscillator without affecting the rest, which
is in accordance with the fact that we are dealing with ensembles
that have a field-like computational behavior.

Given the above assumptions it follows that the number operators commute
$\left[\hat{N}_{i},\hat{N}_{j}\right]=0$ for $i\neq j$, and the
basis for the Hilbert space $\mathcal{H}^{m}$ is given by $\left\{ \left|n_{1},n_{2},...,n_{m}\right\rangle :n_{j}\in\mathbb{N}_{0},j=1,2,...,m\right\} $.
The registers are assumed to be sequential, much like it takes place
in a Turing machine, this means that the matching of a quantum ensemble
of bosonic oscillators to a register machine is unique up to a permutation
of registers, the fact that one works with a particular register sequence,
means that any results regarding the quantum ensemble, specifically
obtained for a quantum register machine in which the labeling is irrelevant,
can be generalized for all of the other permutations of registers.

Now, given a configuration of registers we can define the $\mathcal{C}_{m}\subset\mathcal{H}_{m}$
to be the collection of $m$-coherent states on $\mathcal{H}_{m}$
defined as:
\begin{equation}
\mathcal{C}_{m}:=\left\{ \left|c(1),c(2),...,c(m)\right\rangle :c(j)\in\mathbb{C},\: j=1,2,...,m\right\} 
\end{equation}
that is, each register in $\mathcal{C}_{m}$ corresponds to a bosonic
oscillator in a coherent state. We can, thus, define the $m$-coherent
states category $Coh^{m}$ as the product category of $m$ copies
of $Coh$ and such that each projection functor is labeled by the
corresponding register, that is:
\begin{equation}
\begin{array}{ccccc}
Coh_{i} & \overset{i}{\longleftarrow} & Coh^{m} & \overset{j}{\longrightarrow} & Coh_{j}\end{array}
\end{equation}
\begin{equation}
Coh_{i}\overset{id_{Coh_{i}}}{\longrightarrow}Coh_{j}
\end{equation}
where $id_{Coh_{i}}$ is the identify functor of $Coh_{i}$. From
the (18) and (19), it follows that the product category is also indexing
each copy of $Coh$ to a register, so that we have:
\begin{equation}
\left|c(1),c(2),...,c(m)\right\rangle \overset{j}{\longrightarrow}\left|c(j)\right\rangle 
\end{equation}

Since $Coh^{m}$ is a product category, the morphisms of $Coh^{m}$
must conserve the local morphic connections of each component category,
that is:
\begin{equation}
U\left(c'(j),c(j)\right)\left|...,c(j),...\right\rangle =\left|...,c'(j),...\right\rangle 
\end{equation}
which means that the unitary change of the $j$-th coherent state
to another coherent state in nothing alters the other registers' configuration.

For the Chirikov map, we, then, have the following quantum iterative
transition rule:
\begin{equation}
\prod_{j=1}^{m}U\left(F_{Chirikov}\left(c_{s}(j)\right),c_{s}(j)\right)\left|c_{s}(1),c_{s}(2),...,c_{s}(m)\right\rangle 
\end{equation}

The Poincaré map for Chirikov's standard map becomes, in this case,
a visual representation of this dynamics of each of the ensemble elements
with respect to the corresponding anihilation operator eigenvalue
in polar representation, thus, for chaotic orbits, even if $\left|c_{0}(1),c_{0}(2),...,c_{0}(m)\right\rangle $
begins with coherent states very close to each other, in terms of
the corresponding anihilation operators' eigenvalues, the sensitive
dependence upon initial conditions will, then, lead to a divergence
between the corresponding anihilation operators' eigenvalues, for
each element of the ensemble.

For values of $K$ sufficiently high to destroy some of the KAM tori,
but not high enough, the dynamics will combine elements of regularity
with randomness, becoming an exemplary picture of complexity in its
interweaving of regularity and randomness (Fig.3(c), Fig.4(left)).

We can find, in the quantum register's dynamics, a quantum computational
example of the three phases of complex systems' distributed computing
dynamics, previously identified within classical approaches to evolutionary
computation: a regular phase (Fig.3(a)), a chaotic phase (Fig.3(b)),
and an intermediate phase (Fig.3(c), Fig.4(left)).

The chaotic phase takes hold after the $K>K_{c}$, when the last KAM
invariant torus has been destroyed, thus, for $K=7$ (Fig.3(b)), we
find the system in a fully chaotic phase, no regular dynamical structures
present. For low $K$ the system shows a predominantly regular dynamical
motion (Fig.3(a)). For $K$ above or below $K_{c}$, but neither too
below nor too above, we find the system in the intermediate phase
with regular and chaotic dynamics (Fig.3(b), Fig.4(left)).

The degree of dominance of chaotic elements in the quantum ensemble
versus the regular elements is dependent upon the value of $K$, this
intermediate phase can be called \emph{complex quantum stochastic
phase}, in classical complex systems science it has been called \emph{edge
of chaos} \cite{key-5,key-11,key-12}.

In the current case, the \emph{complex quantum stochastic phase} allows
one to link the notion of \emph{edge of chaos}, found in classical
computing examples of cellular automata \cite{key-5,key-12} and random
boolean networks \cite{key-11}, to the quantum register dynamics'
phase transition to chaos, where the critical line $K_{c}$ plays
the role of a reference marker around which we find the intermediate
stages of the \emph{complex quantum stochastic phase}.

An example of the \emph{complex quantum stochastic phase} is shown
in Fig.4(left), for $K$ above but very close to $K_{c}$, for an
ensemble of 120 elements. The consequences of this phase for the dynamics
of the quantum averages of each of the $\hat{N}_{j}$ number operators,
plotted as a function of the corresponding phases $\phi_{j}$, for
each register, is shown in Fig.4(right). Thus, the quantum computing,
by making emerge a classical dynamics, leads to a pattern of regular
dynamics, intermixed with randomness at the level of the quantum average
occupation numbers of each register, leading to a basic example of
complex quantum field dynamics (if the \emph{register indices} are
made to correspond to different quantum \emph{field modes}) emerging
from parallel path-dependent quantum computation. Such a complex quantum
field dynamics can also be seen for the dynamics of the probabilities,
for each {}``mode'', of finding zero particles in that {}``mode'',
that is, the quantum probabilties $P_{s}\left[n\left(j\right)=0\right]$
dynamics for each quantum register, as shown in Fig.5 for $K=0.979$.

A striking feature, observable in Fig.5, is the fact that the quantum
computation dynamics drives some of the registers (\emph{field modes})
to chaotic dynamics that very nearly approach zero quanta states$\left|...,n(j)=0,...\right\rangle $
with probabilities close to $1$, which is consistent with an almost
{}``spontaneous collapse'' of the corresponding register's quantum
bosonic oscillator coherent quantum state to one of the Hilbert space
basis eigenstates.

\section{Conclusion}

Classical chaotic dynamics is source of stochastic behavior, emergent
from the nonlinear system's deterministic dynamics, so that, even
in the absence of environmental noise, the system behaves randomly,
demanding a statistical description \cite{key-16}. In quantum systems,
chaos leads to an even greater diversity of dynamical behaviors, for
instance, as addressed in \cite{key-2}, dissipative quantum chaos
in a quantum system can lead to delocalization of wave packets induced
by the instability of chaotic dynamics as well as to localization
due to dissipation, the transition from localization to delocalization
taking place when the dissipation time $\frac{1}{\gamma}$ ($\gamma$
being the dissipation rate) becomes larger than the Ehrenfest time
$t_{E}\sim\frac{\left|\ln\hbar\right|}{\lambda}$, a similar {}``localization/collapse''
phenomenon also takes place with respect to the ground-state energy
eigenstates of bosonic oscillators, in the present article's example,
on the other hand, as addressed in \cite{key-3}, quantum dynamics
can also suppresses chaos by conserving invariant tori that would
otherwise be destroyed.

In the current work, a source of chaotic dynamics in quantum systems
is addressed in terms of quantum computing structures. Sequences of
quantum gates can operate upon an initial input state producing a
sequence of quantum states in which both quantum probabilities as
well as quantum averages fluctuate randomly and show dynamical instability
as an emergent result from the system's quantum computation, thus,
unitary evolution, taking place in a path-dependent iterative way,
can lead to chaotic orbits at the quantum state level, affecting both
probabilities as well as quantum averages.

Dynamical stochasticity in quantum computing systems is, therefore,
possible leading to dynamical instability of eigenvalues with the
usual sensitive dependence upon initial conditions characterizing
chaotic dynamics, to this is added the fact that quantum averages
and quantum probabilities can become mathematically expressed as nonlinear
functions of chaotic maps.

On the other hand, once we pass from the orbit-based picture to the
ensemble picture, the Poincaré maps can become visual representations
of the quantum ensemble's state transition, leading us directly to
a complex quantum systems' approach. The field-like behavior of parallel
quantum computation allows one to theoretically show that a quantum
field may make emerge complex dynamics already exemplifying at that
which is the most fundamental level of nature the three dynamical
phases addressed in complex systems science: the regular, the chaotic
and the complex in-between regime that intermixes randomness and regularity.

For complex quantum systems with chaotic dynamics, this intermixed
regime, takes place as an intermixture between chaotic dynamics in
the quantum states' orbits and regular dynamics, that is, a complex
intermixture between quantum dynamical stochasticity and regularity,
which explains the notion of \emph{complex quantum stochastic phase}.

These results, generalized by the quantum computing formalism, may
help quantum econophysics address major puzzles, for both economics
and finance, such as: the emergence of continuous state-like dynamics
(including chaotic dynamics) in discrete (quantized) variables%
\footnote{Quantities sold, shares transactioned, all of these are discrete,
such as their quoted prices, however, they also show evidence of emergent
hypercomputation \cite{key-8}. Emergent hypercomputation from a discrete
basis is a characteristic feature present in bosonic \emph{qunat}-based
quantum computation, as proven in the present work.%
} \cite{key-8}. Quantum chaos theory is, thus, a fruitful basis for
expanding both complex quantum systems science and quantum econophysics
within the growing framework of a \emph{quantum complex systems science}.

\section*{Figures}

\begin{figure}[H]
\includegraphics[scale=0.35]{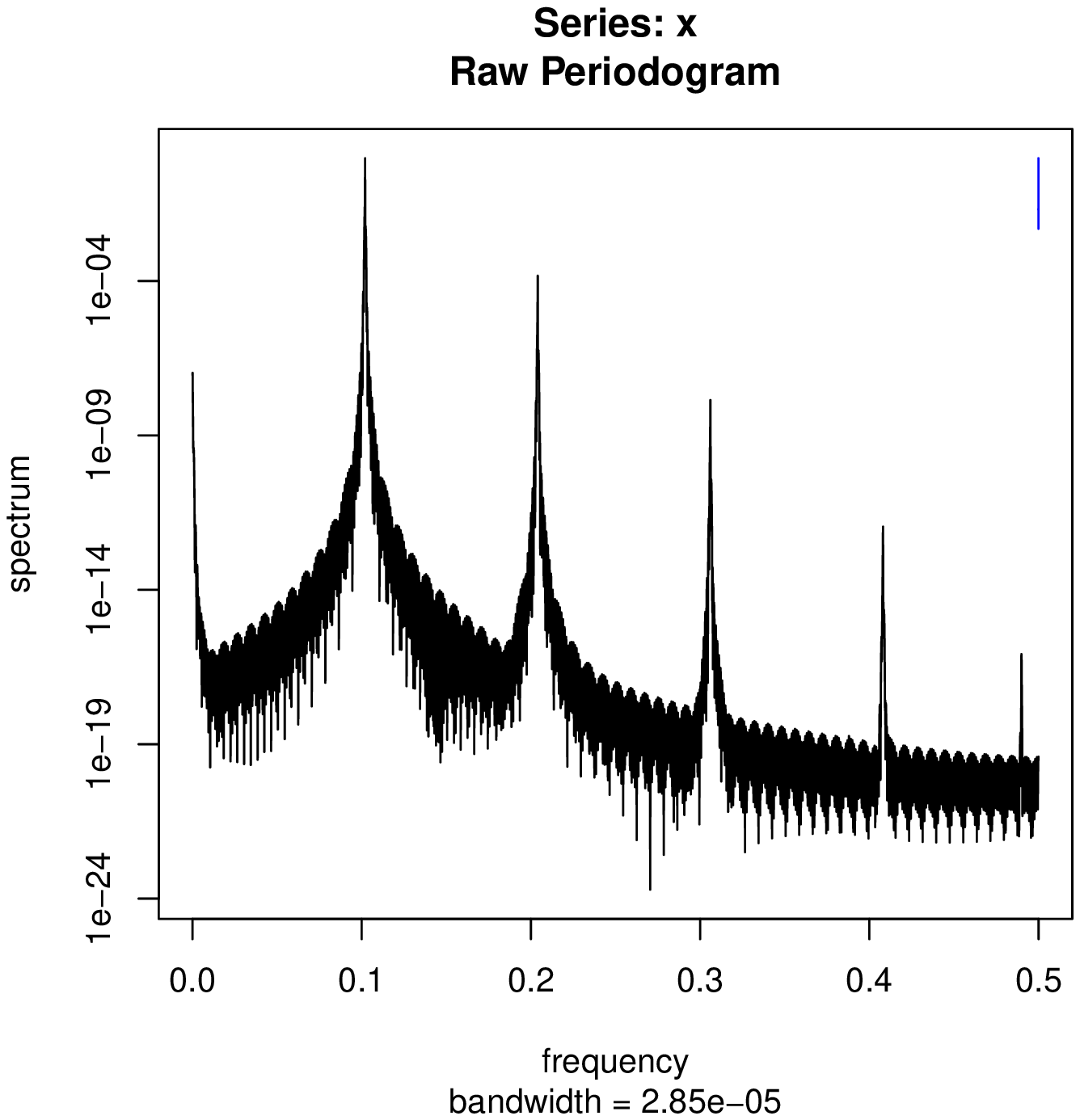}\includegraphics[scale=0.35]{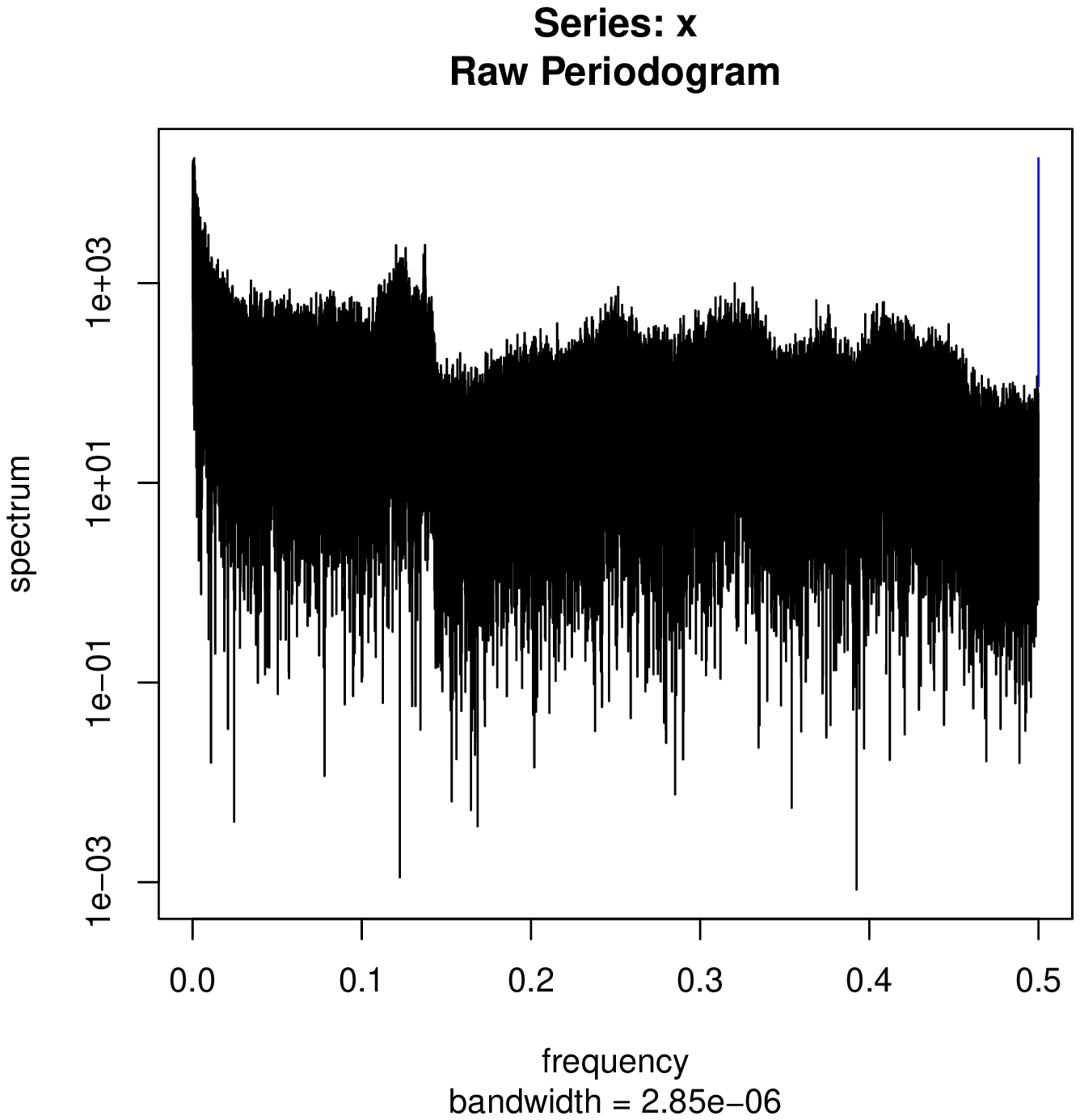}

\caption{{\small Spectrum estimates for $\left\langle \hat{N}\right\rangle _{s}$,
from a Netlogo simulation of the system with $\left|c_{0}\right\rangle =\left|\frac{2\pi}{10}e^{i\frac{3\pi}{10}}\right\rangle $
and the following parameters: $K=0.01$ (left diagram), $K=1.4$ (right
diagram). In both cases 10,000 of the first steps of the simulation
were dropped, with 10,000 additional data points recorded for the
case of $K=0.01$ (which was sufficient to capture the pattern), while
100,000 additional data points were recorded for the case of $K=1.4$
in order to capture more of the pattern.}}
\end{figure}

\begin{figure}[H]
\begin{centering}
\includegraphics[scale=0.35]{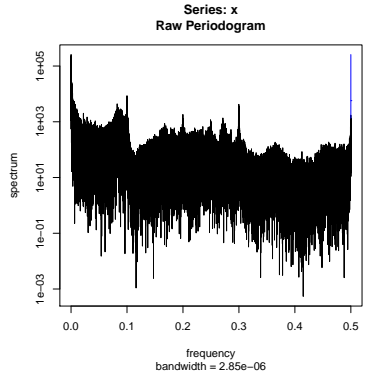}
\par\end{centering}

\caption{{\small Spectrum estimates for $\left\langle \hat{N}\right\rangle _{s}$,
from a Netlogo simulation of the system with $K=0.971635406$. The
initial coherent state was set to $\left|c_{0}\right\rangle =\left|\frac{2\pi}{10}e^{i\frac{3\pi}{10}}\right\rangle $,
10,000 of the first steps of the simulation were dropped, with 100,000
additional data points recorded.}}
\end{figure}

\begin{figure}[H]
\begin{centering}
\includegraphics[scale=0.4]{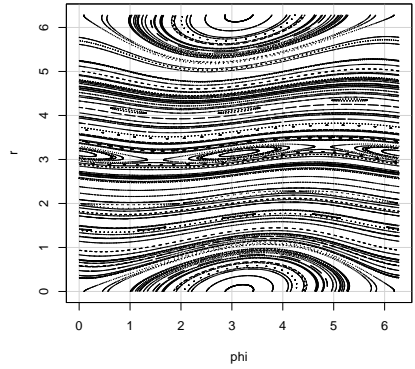}
\par\end{centering}

\begin{centering}
{\small (a)}
\par\end{centering}{\small \par}

\begin{centering}
\includegraphics[scale=0.4]{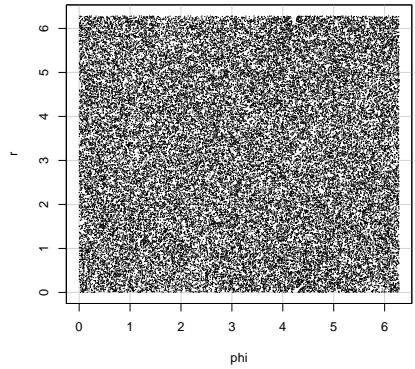}
\par\end{centering}

\begin{centering}
{\small (b)}
\par\end{centering}{\small \par}

\begin{centering}
\includegraphics[scale=0.4]{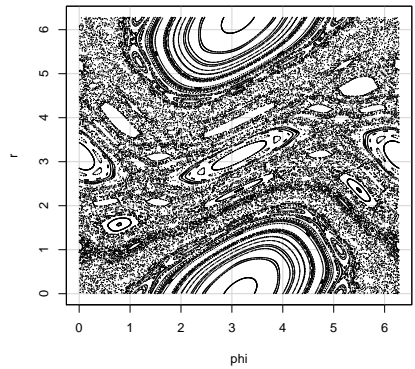}
\par\end{centering}

\begin{centering}
{\small (c)}
\par\end{centering}{\small \par}

\caption{{\small Quantum register machines' dynamics with $m=120$ registers
representing a coherent state ensemble simulated in Netlogo, with
400 iteration steps recorded after 10,000 initial steps, and: (a)
$K=0.3$; (b) $K=7$; (c) $K=1.1$.}}

\end{figure}

\begin{figure}[H]
\includegraphics[scale=0.4]{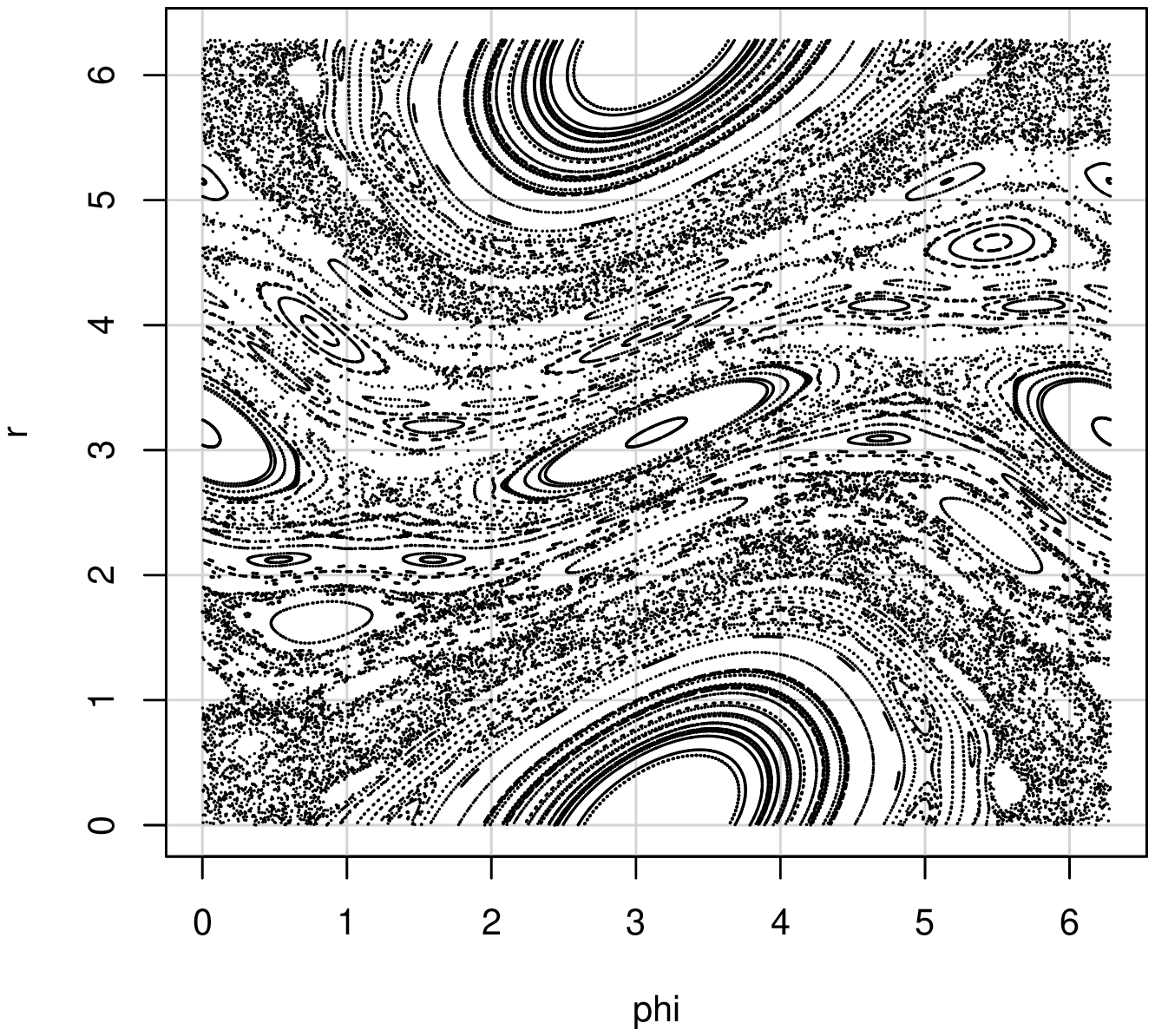}\includegraphics[scale=0.4]{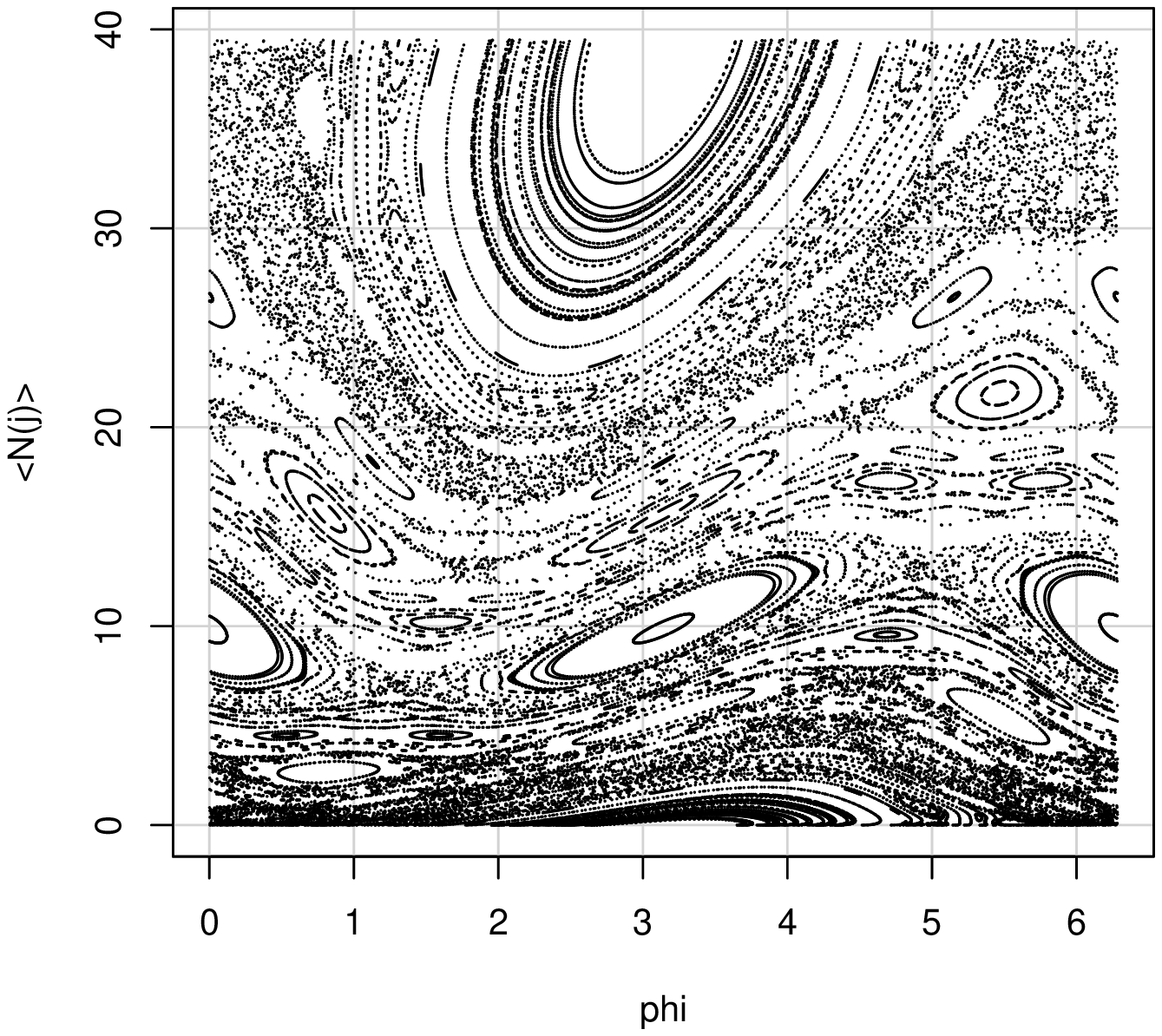}

\caption{\emph{\small Complex quantum stochastic phase}{\small , for $K=0.97163541$,
simulated in Netlogo, for an ensemble of size $m=120$, with 400 iteration
steps recorded after 10,000 initial steps. On the left is the ensemble
dynamics resulting for the quantum register machines computation,
on the right is the corresponding number operator quantum average
dynamics for the same simulation.}}
\end{figure}

\begin{figure}[H]
\begin{centering}
\includegraphics[scale=0.3]{}
\par\end{centering}

\caption{Ensemble ground-state quantum probabilities' dynamics, for $K=0.979$,
simulated in Netlogo, for $m=120$ quantum registers, {\small with
400 iteration steps recorded after 10,000 initial steps.}}

\end{figure}


\begin{thebibliography}{References}
\bibitem[1]{key-1}Beck, C. (2002). {}``Spatio-Temporal Chaos and
Vacuum Fluctuations of Quantized Fields''. World Scientific, Singapore.

\bibitem[2]{key-2}Carlo, G.; Benenti, G. and Shepelyansky, D.L. (2005).
{}``Dissipative quantum chaos: transition from wave packet collapse
to explosion''. Phys. Rev. Lett. 95, 164101.

\bibitem[3]{key-3}Casetti, L.; Gatto, R. and Modugno, M. (1997).
{}``Chaos in effective classical and quantum dynamics''. Phys.Rev.
E57 (1998) 1223-1226.

\bibitem[4]{key-4}Chirikov, B.V. ({[}1969{]}, 1971). {}``Research
Concerning the Theory of Non-linear Resonance and Stochasticity''.
Report 267, Nuclear Physics Institute of the Siberian Section of the
USSR Academy of Sciences, Novosibirsk. CERN Translation.

\bibitem[5]{key-5}Crutchfield, J. P. and Young, K. (1990). {}``Computation
at the Onset of Chaos'', \emph{in} \emph{Entropy, Complexity, and
the Physics of Information}, Zurek, W. (Ed.), SFI Studies in the Sciences
of Complexity, VIII, Addison-Wesley, Reading, USA, 223\textendash{}269.

\bibitem[6]{key-6}Gardiner, C.W. and Zoller (2004). {}``Quantum
Noise - A Handbook of Markovian and Non-Markovian Quantum Stochastic
Methods with Applications to Quantum Optics''. Springer, Germany.

\bibitem[7]{key-7}Gonçalves, C.P. (2012). {}``Financial Turbulence,
Business Cycles and Intrinsic Time in an Artificial Economy''. Algorithmic
Finance (2012), 1:2, 141-156.

\bibitem[8]{key-8}Gonçalves, C.P. (2012). {}``Chaos and Nonlinear
Dynamics in a Quantum Artificial Economy''. arXiv:1202.6647v1 {[}nlin.CD{]}.

\bibitem[9]{key-9}Haake, F. (2010). {}``Quantum Signatures of Chaos''.
Springer, Germany.

\bibitem[10]{key-10}Haken, H. (1985). {}``Towards a Quantum Synergetics:
Pattern Formation in Quantum Systems far from Thermal Equilibrium''.
Phys. Scr. 32 274.

\bibitem[11]{key-11}Kauffman, S.A. (1993). {}``The Origins of Order:
Self-Organization and Selection in Evolution. Oxford University Press,
USA.

\bibitem[12]{key-12}Langton, C. (1990). {}``Computation at the edge
of chaos''. Physica D, 42, 1990.

\bibitem[13]{key-13}Mac Lane, S. (1971). {}``Categories for the
Working Mathematician''. Springer, USA.

\bibitem[14]{key-14}Piotrowski, E.W. and Sladkowski, J. (2002). {}``Quantum
Market Games''. Physica A, 2002, Vol.312, 208-216.

\bibitem[15]{key-15}Piotrowski, E.W. and Sladkowski, J. (2008). {}``Quantum
auctions: Facts and myths''. Physica A, Vol. 387, 15, 3949-3953.

\bibitem[16]{key-16}Prigogine, I. (1997). {}``The End of Certainty:
Time, Chaos, and the New Laws of Nature''. The Free Press, USA.

\bibitem[17]{key-17}Prigogine, I. (2003). {}``Is Future Given?''.
World Scientific Publishing Co., Singapore.

\bibitem[18]{key-18}Reichl, L.E., (2004). {}``The Transition to
Chaos - Conservative Classical Systems and Quantum Manifestations''.
Springer-Verlag, USA.

\bibitem[19]{key-19}Saptsin, V. and Soloviev, V. (2009). {}``Relativistic
quantum econophysics - new paradigms in complex systems modelling''.
arXiv:0907.1142v1 {[}physics.soc-ph{]}.

\bibitem[20]{key-20}Saptsin, V. and Soloviev, V. (2011). {}``Heisenberg
uncertainty principle and economic analogues of basic physical quantities''.
arXiv:1111.5289v1 {[}physics.gen-ph{]}.

\bibitem[21]{key-21}Stöckmann, H.-J. (2000). {}``Quantum Chaos -
an introduction''. Cambridge University Press, UK.

\bibitem[22]{key-22}Varela, F. (1997). {}``Invitation aux sciences
cognitives''. Seuil, France.

\bibitem[23]{key-23}Zimand, M. (2004). {}``Computational Complexity:
A Quantitative Perspective''. Elsevier. The Netherlands.

\end{thebibliography}
\end{document}